\documentclass{IEEEtran}

\usepackage[nospace]{cite}

\usepackage{mathtools, amssymb, fontawesome5}
    \def\A{\mathsf A}
    \def\C{\mathsf C}
    \def\G{\mathsf G}
    \def\T{\mathsf T}
    \def\le{\leqslant}
    \def\ge{\geqslant}
    \def\EE{\mathbb E}
    \def\PP{\mathbb P}

    \newtheorem{theorem}{Theorem}
    \newtheorem{lemma}[theorem]{Lemma}
    \newtheorem{proposition}[theorem]{Proposition}
    \newtheorem{remark}[theorem]{Remark}
    \def\bca#1{\begin{cases*} #1 \end{cases*}}
    \def\bma#1{\begin{bmatrix} #1 \end{bmatrix}}
    \def\sma#1{\left[ \begin{smallmatrix} #1 \end{smallmatrix} \right]}

\usepackage{tikz, booktabs}
    \tikzset{every picture/.style = {line cap=round,line join=round}}

\begin{document}

                                   \title
                   {Frequency Coding over Noisy Sampling}
                                      
                                   \author
               {Bo-Yu Su, Hsin-Po Wang, Venkatesan Guruswami}

                                 \maketitle

\begin{abstract} \boldmath
    DNA molecules are so small that it might be practical to use their
    frequency vectors to encode messages.  More precisely, a sender can
    inject $M_X$ copies of the string $X =$ CATCATCAT into a pool and the
    receiver can recover $M_X$ by sequencing the pool.  There are, however,
    two sources of uncertainty: (a) $M_X$ is usually too big to be counted
    exactly, but is estimated by sampling.  (b) The DNA sequencer could be
    noisy; it may have difficulty distinguishing CATCATCAT from CATGATCAT.
    
    Recently, Tamir, Weinberger, and Guillén i Fàbregas clarified the amount
    of information the frequency vector can carry under (a).  They showed
    that each string can carry about $\log_4 R$ bits, where $R$ is the
    average number of times each string is read.  They also showed that
    $\log_4 R$ bits can be achieved by a low-complexity uncoded scheme under
    the condition that there are at least $\sqrt R$ distinct strings.  In
    this paper, we show that a low-complexity coded scheme can achieve the
    same $\log_4 R$ bits unconditionally.  We then generalize the scheme to
    handle sequencing noise, (b), and show that the noise penalizes the total
    number of bits by $\log_2 \det W$, together with a linear term due to the
    use of Fourier transforms in our proof.  The former penalty $\log_2 \det
    W$ is asymptotically the same as that obtained by Gerzon, Shomorony, and
    Weinberger; our scheme trades a small amount of rate for practical
    complexity.
\end{abstract}

\footnotetext[0\def\thefootnote{\faEnvelope[regular]}]{
    Bo-Yu Su and Hsin-Po Wang are with National Taiwan University;
    emails: \{R14942154, hsinpo\}@ntu.edu.tw.
    Venkatesan Guruswami is with the University of California, Berkeley;
    some of this work was done when the author was visiting
    the National Taiwan University; email: venkatg@berkeley.edu.
}

\footnotetext[0\def\thefootnote{\faDollarSign}]{
    This work was supported in part by
    the National Science and Technology Council (NSTC), Taiwan,
    under Grant NSTC 114-2222-E-002-007.
}

% \footnotetext[0\def\thefootnote{\faChalkboardTeacher}]{}

\section{Introduction}

    DNA is an interesting candidate medium for data storage with
    immense potential.  It is dense, durable, and easy to replicate
    \cite{MiP24,SKS24,XLC24}.  However, DNA data storage poses unique
    challenges due to the biochemical processes involved in writing and
    reading data.  One challenge is that we cannot synthesize arbitrarily
    long DNA strands with high fidelity.  And even if we try to,
    a long strand tends to break into shorter fragments over time
    \cite{BMY22,RVS21,ShV21}.  Therefore, current studies on DNA data storage
    systems focus on using a pool of multiple short strands \cite{RGC26} to
    represent the data collectively.

    There are two regimes.  The long strand regime is when
    \[ M \ll 4^\ell \]
    where $M$ is the number of strands and $\ell$ is the length of each
    strand.  It can be shown that the maximum amount of information each
    strand can carry is about $(\ell - \log_4 M) \log_2 4$ bits.  This
    quantity can be explained by prefixing each strand with an index about
    $\log_4 M$ nucleotides long, and using the remaining $\ell - \log_4 M$
    nucleotides to carry information.  When there is noise, the story becomes
    more involved.  Readers are referred to \cite{WeM22,LSW23,WaG25} for the
    latest updates.
    
    The short strand regime is when
    \[ M \gg 4^\ell. \]
    Such $\ell$ is not long enough for unique indices, let alone the payload.
    Hence, in this regime, each \emph{logical} string $X \in \{\A, \C, \G,
    \T\}^\ell$ will appear in multiple \emph{physical} strands.  The main
    carrier of information is the number of strands that share the same
    string.  That is, we use the frequency vector
    \begin{equation}
        F\colon \{\A, \C, \G, \T\}^\ell \to [0, 1]                  \label{F}
    \end{equation}
    to represent data, where $F(X) M$ is the number of strands that carry
    string $X$, and $\sum_X F(X) = 1$.

    A natural question is how much information each frequency $F(X) \in [0,
    1]$ can carry.  It is unreasonable to expect that we can learn this real
    number with infinite precision and obtain infinite information.  Instead,
    a widely accepted model posits that the receiver can only sample the DNA
    pool $N \approx M$ times, and hope that each string will appear as close
    to $F(X) N$ times as possible.  Since $N$ is usually huge in practice,
    the actual number of reads that result in $X$ can be modeled by a Poisson
    random variable with mean and variance both $F(X) N$.  This means that
    the uncertainty of estimating $F(X) N$ is within
    \begin{equation}
        \pm \Theta(\sqrt{F(X) N})                                \label{sqrt}
    \end{equation}
    with high probability.

    For now, let us assume that every $F(X)$ is in $(1 \pm 1/2) / 4^\ell$.
    Then $F(X) N$ lies in the range $\Theta(N / 4^\ell)$.  With uncertainty
    $\Theta(\sqrt{F(X) N}) = \Theta(\sqrt{N / 4^\ell})$, the range of $F(X)
    N$ can fit $\Theta (\sqrt{N / 4^\ell})$ distinguishable quantization
    levels (QLs).\footnote{ A quantization level is one discrete
    representative value that the encoder may assign to $F(X) N$; two levels
    are distinguishable when their separation is larger than the sampling
    uncertainty.} This many levels can carry $\log_2 \sqrt{N / 4^\ell} =
    \log_4 N - \ell$ bits of information.  Note that $N / 4^\ell$, denoted by
    $R$ hereafter, is also the average number of times each string is read.
    Therefore, one can expect that each string carries about $\log_4 R$ bits.

    To what extent does the expectation hold?  The quantity $\log_4 R$ bits
    per string was first conjectured in \cite{ShH22}.  Later, in
    \cite{GSW25}, Gerzon, Shomorony, and Weinberger proved that $\log_4 R$ is
    achievable and order-optimal for the range $4^\ell < M < 4^{2\ell}$.
    (Since it is assumed that $M \propto N$, this is equivalent to saying
    $4^\ell < N < 4^{2\ell}$, ignoring constant factors.) Recently, Tamir,
    Weinberger, and Guillén i Fàbregas \cite{TWG25} showed that the $\log_4
    R$ bound is achievable and order-optimal for the entire short-strand
    regime $4^\ell < M$.  They also proposed a low-complexity uncoded scheme
    that achieves $\log_4 R$ when $4^\ell < M < 4^{3\ell}$.
    
    Now the only question that remains is whether we can construct a
    low-complexity scheme achieving $\log_4 R$ bits per string for the
    very-short regime $4^{3\ell} < M$.  In this paper, we provide exactly
    that.

    \begin{theorem}[main theorem, noiseless case]       \label{thm:noiseless}
        For the entire short-strand regime, there exists a low-complexity
        scheme that encodes $(1 - o(1)) \log_4 R$ bits per string, or $(1 -
        o(1)) 4^\ell \log_4 R$ bits per pool.  When the gap to capacity is
        fixed, the error probability decays exponentially in $4^\ell$.
    \end{theorem}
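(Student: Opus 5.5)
The plan is to concatenate an outer error-correcting code with an inner frequency modulation. The inner step is a variance-stabilizing quantization of each count; the outer step is a low-complexity code (e.g.\ a Reed--Solomon or expander-based code over a large alphabet) that corrects the rare strings whose count is misread. Concretely, I would set $q = \lfloor R^{1/2 - \epsilon}\rfloor$ quantization levels and assign each string $X$ a symbol $s_X \in \{0,\dots,q-1\}$. The level is realized by the frequency
\[ F(X) \propto \bigl(c + s_X/\sqrt R\bigr)^2, \]
with $c$ a constant. Spacing the levels uniformly in the square-root domain matches the Poisson uncertainty $\pm\Theta(\sqrt{F(X)N})$ of \eqref{sqrt}. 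Normalizing $\sum_X F(X)=1$ costs nothing essential, because the receiver knows the total number of reads $N$. The $4^\ell$ strings then carry the codeword $(s_X)_X$ of an outer code of length $4^\ell$ over an alphabet of size $q$. This gives $(1-o(1))4^\ell\log_2 q = (1-o(1))4^\ell \log_4 R$ bits per pool once $\epsilon\to 0$.

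The receiver counts the reads $Y_X \sim \mathrm{Poisson}(F(X)N)$ and computes $\sqrt{Y_X/N'}$, with a suitable normalization $N' \approx N/4^\ell$ so that the level $s_X$ corresponds to $\sqrt{Y_X/N'} \approx c\sqrt R + s_X$. It then rounds to the nearest level. By the Anscombe/variance-stabilization bound, $\sqrt{Y_X}$ concentrates within $O(1)$ of $\sqrt{\EE Y_X}$ with sub-Gaussian tails; a Chernoff bound for Poisson variables is enough here. The spacing between adjacent levels in $\sqrt{Y_X}$ units is $R^{\epsilon}$. So each symbol is misdecoded independently, given $F$, with probability $p \le \exp(-\Omega(R^{2\epsilon}))$, which is $o(1)$ when $R\to\infty$ and in any case is a fixed small constant once $\epsilon$ and $c$ are chosen.

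The outer code must correct a $p$-fraction of symbol errors with rate $1-O(p)-o(1)$, decode in near-linear time, and have error probability $e^{-\Omega(4^\ell)}$. Reed--Solomon codes need alphabet size at least $4^\ell$. When $q < 4^\ell$, I would group symbols into super-symbols, or use an expander-based code with near-MDS rate (Guruswami--Indyk style), whose error probability for independent errors at fraction $p$ below half its distance decays like $e^{-\Omega(4^\ell)}$. This is a Chernoff bound on the number of misread strings. Because the errors are independent across $X$ given the codeword, Chernoff applies directly.

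I expect the main obstacle to be the very-short regime $R \gg 4^{2\ell}$, where the previous uncoded scheme failed. There $q$ is much larger than $4^\ell$, so each symbol is a huge integer. A fixed fraction of wrong symbols must still be corrected, with decoding time polylogarithmic in $q$ per symbol. I would handle this by writing each level in base $2^k$ and coding the digit planes separately. The high-order digits are almost never wrong, and only the lowest planes carry real error. This keeps the complexity low and makes the rate loss $O(p)\log_2 q$ a vanishing fraction of $\log_4 R$. The remaining bookkeeping is to check that the $o(1)$ terms ($\epsilon$, the $O(p)$ redundancy, and the normalization slack from unequal $F(X)$) can be made uniform over all $\ell$ and $R$.
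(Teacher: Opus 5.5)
\textbf{Gap: normalization.} You set $F(X)\propto(c+s_X/\sqrt R)^2$, so the normalizer $Z=\sum_X(c+s_X/\sqrt R)^2$ depends on the codeword. Knowing $N$ does not help, because $\sum_XY_X=N$ holds whatever $F$ is. The counts therefore reveal the amplitudes only up to a common unknown factor, and since the offset $c\sqrt R$ dominates, that ambiguity spans the whole range of levels. Concretely, every constant codeword (a Reed--Solomon code contains all $q$ of them) maps to the same uniform pool $F\equiv4^{-\ell}$, so your encoder is not even injective. The paper's remedy is to not normalize at all: use absolute levels whose total over $X\ne\A^\ell$ is below $1$, and let $\A^\ell$ absorb $1-\sum_{X\ne\A^\ell}F(X)$, sacrificing one symbol to the code.

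This device, not alphabet size, is what the very-short regime needs. The uncoded scheme of \cite{TWG25} is capped at $4^\ell$ levels precisely because it enforces $\sum_XF(X)=1$ by using every level equally often. A large alphabet is harmless by itself: a Reed--Solomon code of length $4^\ell$ over $\mathbb F_q$ with $q\ge4^\ell$ works directly, at $\mathrm{poly}(\log q)$ cost per field operation. So your digit-plane construction is unnecessary. As described it is also unsound, since an off-by-one level error can flip high-order digits through a carry ($0111\to1000$).

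\textbf{Gap: independence.} In the model of Section~\ref{sec:back} the number of reads $N$ is fixed, so the counts are multinomial, not independent Poissons. The independence your Chernoff step relies on must therefore be established. The paper does this by coupling $E$ with a Poisson point process of half the intensity (Lemma~\ref{lem:Poisson}) plus constructive resampling. A shorter route for you is the standard Poissonization bound: any event has multinomial probability at most $e\sqrt N$ times its probability under independent $\mathrm{Poisson}(F(X)N)$ counts. That factor is $2^{O(\ell)}$, harmless against $e^{-\Omega(4^\ell)}$.

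\textbf{Gap: level scaling.} Your levels are mis-scaled. Even ignoring normalization, with $F(X)=(c+s_X/\sqrt R)^2/4^\ell$ you get $\sqrt{F(X)N}=c\sqrt R+s_X$. Adjacent levels are then $1$ apart in $\sqrt{Y_X}$ units, comparable to the $\Theta(1)$ fluctuation of $\sqrt{Y_X}$, so the per-symbol error probability is a constant rather than $\exp(-\Omega(R^{2\epsilon}))$. You need amplitudes $c+s_X/q$ to get spacing $\sqrt R/q=R^{\epsilon}$.

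With these repairs, variance-stabilized spacing is a valid alternative to the paper's linear progression \eqref{ap2}. It gains only a constant factor in the number of levels, i.e., $O(1)$ bits per string, which is invisible at the scale of $(1-o(1))\log_4 R$.
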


    Theorem~\ref{thm:noiseless} can be generalized to the case with
    sequencing noise.  Consider the model where the DNA sequencer is a
    quaternary symmetric channel with substitution probability $p < 3/4$:
    \begin{equation}
        V(y|x) = \bma{
            1-p & p/3 & p/3 & p/3 \\
            p/3 & 1-p & p/3 & p/3 \\
            p/3 & p/3 & 1-p & p/3 \\
            p/3 & p/3 & p/3 & 1-p
        }                                                     \label{channel}
    \end{equation}
    The noise morphs $F$ into a less informative distribution $WF$, where $W
    \coloneqq V^{\otimes\ell}$ is a matrix describing the noise on the string
    level and $F$ is treated as a column vector.  As $F$ lives in the
    probability simplex
    \[
        \Bigl\{x : \sum x = 1 \text{ and } x \ge 0 \Bigr\}
        \subset [0, 1]^{4^\ell},
    \]
    multiplying by $W$ reduces its volume by a factor of $\det W$.  Hence,
    assuming the same quantization density, the number of distinguishable
    levels is reduced by a factor of $\det W$, and the number of bits they
    carry is penalized by $\log_2 \det W = 3 \ell 4^\ell \log_{16} \delta$,
    where $\delta \coloneqq 1 - 4p/3$.

    \begin{theorem} [main theorem, noisy case]              \label{thm:noisy}
        For the very-short-strand regime where $r \coloneqq \log_4 R - \ell +
        3\ell \log_{16} \delta$ is positive, there exists a low-complexity
        scheme that encodes $(1 - o(1)) r$ bits per string, or $(1 - o(1))
        (4^\ell (\log_4 R - \ell) + 3 \ell 4^\ell \log_{16} \delta)$ bits per
        pool.
    \end{theorem}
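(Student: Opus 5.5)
The plan is to reduce the noisy case to the noiseless construction behind Theorem~\ref{thm:noiseless} by working in the Fourier (Walsh--Hadamard-type) basis that diagonalizes $W = V^{\otimes\ell}$. The matrix $V$ has eigenvalue $1$ on the all-ones vector and eigenvalue $\delta = 1-4p/3$ on the three-dimensional orthogonal complement. Hence $W$ is diagonalized by the character basis $\{\chi_s\}_{s\in(\mathbb Z_4)^\ell}$, and the eigenvalue at frequency $s$ is $\delta^{|s|}$, where $|s|$ is the Hamming weight of $s$. As a sanity check on bookkeeping, $\sum_s |s| = 3\ell 4^{\ell-1}$, so $\log_2\det W = 3\ell 4^{\ell-1}\log_2\delta = 3\ell4^\ell\log_{16}\delta$, which matches the penalty in the statement.

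\emph{Encoding.} The encoder picks the frequency vector in Fourier coordinates. Write $F = 4^{-\ell}(\mathbf 1 + \sum_{s\ne0}\hat F(s)\chi_s)$. The observation is $WF$, and its Fourier coefficients are $\delta^{|s|}\hat F(s)$. The receiver estimates each coefficient of $WF$ from the empirical read counts. The Poisson sampling noise from~\eqref{sqrt} has per-string variance about $R$ when every $F(X)\approx 4^{-\ell}$. After an orthonormal transform, this becomes roughly isotropic noise of the same per-coordinate scale. So the receiver sees $\delta^{|s|}\hat F(s)$ plus noise of standard deviation $\Theta(1/\sqrt R)$ in normalized units. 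Coordinate $s$ can therefore carry $\log_2$ of (range $\times\,\delta^{|s|}\sqrt R$) bits. We choose the quantization step of $\hat F(s)$ to be $\Theta(\delta^{-|s|}/\sqrt R)$, i.e.\ coarser at high frequencies. To keep $F$ inside the simplex, which requires $F(X)\in(1\pm1/2)4^{-\ell}$, I would restrict $\hat F$ to a box small enough that $\sum_s|\hat F(s)|$ stays below $1/2$ in the worst case. That costs only a constant per coordinate, or better, an $\ell$-dependent factor which is where the linear term in the rate comes from.

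\emph{Coding.} Rather than uncoded per-coordinate quantization, I would reuse the coded scheme from the proof of Theorem~\ref{thm:noiseless}. On each frequency coordinate (or group of coordinates sharing a weight $|s|$), apply the same low-complexity outer code/quantizer that achieves $(1-o(1))\log_4$ of the per-coordinate SNR, with the SNR rescaled by $\delta^{2|s|}$. Summing $\tfrac12\log_2(R\delta^{2|s|}/4^{\ell}\cdot c)$ over all $4^\ell$ coordinates gives $4^\ell(\log_4R-\ell)+\sum_s|s|\log_2\delta$ up to $o(1)$ factors, which is the stated total. The per-string rate is $r$. Coordinates with $R\delta^{2|s|}\le 4^\ell$ (negative contribution) would simply be frozen to zero. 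Since $r>0$, this only improves the count, and the stated bound remains a lower bound.

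\textbf{Main obstacle.} The hardest step is the noise analysis after the Fourier transform. The Poisson noise is not Gaussian and not exactly isotropic: its covariance is $\mathrm{diag}(WF)N$, not a multiple of the identity. The encoding is also constrained to the simplex box, which forces the worst-case $\ell_1$ budget that produces the linear-in-$\ell$ loss. I would first bound every Fourier coefficient's estimation error by a Chernoff/Bernstein tail using $F(X)=\Theta(4^{-\ell})$, and then union-bound over the $4^\ell$ frequencies. With the gap to capacity fixed, the quantization margins exceed the noise by a constant factor, so the union bound still yields vanishing error probability.
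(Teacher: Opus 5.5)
Your skeleton matches the paper's. You diagonalize $W$ by characters and fix the zero-frequency coefficient. You confine the remaining coefficients to an $\ell_1$-box so that $F\ge 0$; shrinking each coordinate's range to $\Theta(4^{-\ell})$ is exactly what costs the extra $\ell$ bits per coordinate relative to $\log_4 R$, so it is not a constant loss. You quantize frequency $s$ with step proportional to $\delta^{-|s|}$, freeze coordinates with negative bit count, and sum to $4^\ell(\log_4 R-\ell)+\log_2\det W$. Use the real $\pm1$ characters of $(\mathbb Z_2^2)^\ell$, as the paper does via $H$; with the complex characters of $\mathbb Z_4^\ell$ you must also impose $\hat F(-s)=\overline{\hat F(s)}$.

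The routes split at the noise analysis. The paper never controls individual Fourier coordinates. It Poissonizes, shows $\|Z\|_2^2\le 4N$ via a quadratic-form concentration inequality, and uses Parseval to conclude that at most a $4/\tau^2$ fraction of coordinates exceed $\theta=\tau\sqrt{N/4^\ell}$. It absorbs these as adversarial errors with a randomly interleaved capacity-achieving binary code, paying $\log_2\tau$ bits per coordinate with $\tau\to\infty$ arbitrarily slowly. You instead want every coordinate correct at once, via a per-coordinate tail bound and a union bound. That is viable and more elementary. Since $\sum_X E(X)\chi_s(X)=\sum_{n=1}^N\chi_s(Y_n)$ is a sum of $N$ i.i.d.\ $\pm1$ variables with mean $N\delta^{|s|}\hat F(s)$, Hoeffding applies directly to the multinomial counts. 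No Poissonization is needed, and the non-isotropic covariance you call the main obstacle plays no role. Once all coordinates are correct, the outer code you sketch is superfluous and the scheme can be uncoded.

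The step that fails as written is your justification of the union bound. If the margin exceeds the noise standard deviation by only a constant factor, each coordinate fails with constant probability, and the union bound over $4^\ell$ coordinates is useless. The half-step in the observed domain must be $u/\sqrt N$ with $u^2\ge 2\ell\ln 4+\omega(1)$, e.g.\ $u=\ell$, so that $2\cdot 4^\ell e^{-u^2/2}\to 0$. This costs $\log_2(2u)=O(\log\ell)$ bits per coordinate, which is negligible only because $r=\Theta(\ell)$. Note also that a fixed gap of $\varepsilon r$ bits buys a margin of $2^{\varepsilon r}$, not a constant.

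Relatedly, in your normalization the noise on $\delta^{|s|}\hat F(s)$ has standard deviation $\Theta(1/\sqrt N)$, not $\Theta(1/\sqrt R)$. With your stated numbers and a box of width $4^{-\ell}$ you would get $\log_4 R-2\ell$ per coordinate. So your final sum does not follow from your intermediate values, although it is correct with $1/\sqrt N$.

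As for what each route buys: yours is uncoded and elementary, but it ties the failure probability to the per-coordinate margin, so at vanishing rate loss it only reaches $\exp(-2^{o(\ell)})$. The paper's counting argument tolerates a vanishing fraction of badly corrupted coordinates. That decouples its failure probability, $\exp(-\Omega(2^\ell))$, from the margin $\tau$, at the cost of the coding layer and the concentration inequality.
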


    To summarize, Theorem~\ref{thm:noiseless} achieves the capacity proved by
    \cite{TWG25}, completing the noiseless picture.  Theorem~\ref{thm:noisy}
    \emph{almost} achieves the $\log_2 \det W$ penalty that is reported in
    Gerzon, Shomorony, and Weinberger \cite{GSW26}.  The extra $-\ell$ term
    is due to the use of Fourier transforms in our proof.  Gerzon, Shomorony,
    and Weinberger \cite{GSW26} also consider more general channels, while we
    only focus on $V$ for a proof of concept.

    The paper is organized as follows.  Section~\ref{sec:back}
    reviews the current state of DNA technology and states the problem.
    Section~\ref{sec:uncoded} describes the uncoded scheme proposed
    by Tamir, Weinberger, and Guillén i Fàbregas \cite{TWG25}.  In
    Section~\ref{sec:noiseless}, we present our new coding scheme for
    noiseless sampling.  In Section~\ref{sec:noise}, we generalize it
    to the noisy sampling case.

\section{DNA Background}                                     \label{sec:back}

    Storing data on DNA is a promising technology that has been discussed and
    researched for over two decades.  DNA molecules are exceptionally stable
    compared to artificial media of the same size.  But that does not mean it
    is error-free.  In this section, we briefly explain current
    biotechnologies for writing, copying, and reading data, and derive a
    mathematical model.

\subsection{Synthesis (Writing)}

    Synthesizing a new DNA strand is done on a silicon-based chip analogous
    to a typewriter: There is a keyboard with four keys \fbox{$\A$},
    \fbox{$\C$}, \fbox{$\G$}, and \fbox{$\T$}.  Pressing each key will append
    the corresponding nucleotide to the tail of the current working strand.
    On the downside, the typewriter is slow, producing one nucleotide per
    minute \cite{FKK25}.  On the upside, we can increase the number of
    parallel typewriters until the overall data throughput becomes useful.

    The typewriters are not only slow but error-prone \cite{FMS21,MOS22}; the
    longer the strand, the higher the chance of typos.  On the other hand,
    there is not a clear limit on the number of typewriters one can use in
    parallel \cite{YTL24}.  Accordingly, synthesized DNA strands will almost
    always be very short (about hundreds of nucleotides long) compared to the
    number of typewriters (potentially in the millions).

\subsection{Amplification (Copying)}

    One feature that sets DNA apart from traditional data storage media is
    that replicating DNA is nearly effortless.  Unlike synthesizing new
    strands, which requires silicon chips, copying existing strands is done
    using specialized enzymes called \emph{polymerases}.  Commercially
    available polymerases can duplicate DNA at a rate of 1000 nucleotides per
    second, and they work in parallel.  Even if we account for the time for
    thermal cycling, the number of DNA strands \emph{doubles} every few
    minutes \cite{Sai88}.

    Another feature of enzyme-based amplification is its fidelity.
    Polymerases are responsible for copying DNA in living organisms, making
    $1$ mistake every $10{,}000$ nucleotides \cite{Joh93}.  This is
    effectively error-free considering that other steps (writing and reading)
    introduce significantly more errors.  If anything, we only need to worry
    about amplification bias.  That is, in every thermal cycle, the fraction
    of strands that are duplicated successfully is about 90\%--100\%.  If
    this fraction varies across different $X$'s, the relative frequencies,
    \eqref{F}, may change, which is a source of errors that we do not
    consider in this work.

\subsection{Sequencing (Reading)}

    Reading DNA is again done on silicon-based chips.  Every chip contains
    millions of readers working in parallel.  Each reader processes about 500
    nucleotides per second.  In general, the error rate is about 5\% per
    nucleotide \cite{LvB24}.

    Apart from the per-nucleotide errors, readers read the strands out of
    order, meaning that only permutation-invariant statistics can carry
    information, such as the multiplicity of each string.  Moreover, the DNA
    pool is sometimes amplified before reading.  Therefore, we are not
    counting the original strands out of synthesizers, but sampling from the
    distribution \eqref{F} (or its distortion due to amplification bias).

\subsection{Mathematical Model}

    We now state a mathematical model for the remainder of the paper to work
    with.  We assume that the synthesizer produces $M$ error-free strands,
    each $\ell$ nucleotides long.  We denote by $F(X)$ the relative frequency
    of a string $X \in \{\A, \C, \G, \T\}^\ell$ as in \eqref{F}.

    We turn on the sequencer for an amount of time that $N$ strands can be
    read.  In the noiseless case, the sequencer reports
    \[ X_1, X_2, \dotsc, X_N \in \{\A, \C, \G, \T\}^\ell, \]
    where each $X_n$ is independently sampled from the distribution \eqref{F}
    with replacement.  In the noisy case, the sequencer is affected by the
    quaternary symmetric channel $V\colon \{\A, \C, \G, \T\} \to \{\A, \C,
    \G, \T\}$ and reports
    \[ Y_1, Y_2, \dotsc, Y_N \in \{\A, \C, \G, \T\}^\ell, \]
    instead of $X_1, X_2, \dotsc, X_N$.

    We will let $M$ and $\ell$ approach infinity such that $\ell / \log_4 M$,
    usually denoted by $\beta$, remains constant.  This work is in the
    short-strand regime $M > 4^\ell$, i.e., $\beta < 1$.  The ratio $N/M$ is
    called the \emph{coverage depth}, and is fixed in this and related works,
    meaning that $N$ also grows along with $M$ and $\ell$.

    The reason that we can assume \emph{with-replacement} sampling is as
    follows.  Because amplification is essentially free of cost and free of
    errors, we actually need two parameters, $M_\flat$ and $M_\sharp$, to
    denote the numbers of strands before and after amplification,
    respectively.  Now, $M_\sharp$ controls the total number of strands that
    can be sampled with or without replacement.  But the distinction does not
    matter if $M_\sharp$ is large enough, so we choose ``with replacement''
    as it is easier to analyze.
    
    Next let us comment on the coverage depth $N/M$.  We have argued that
    $M_\sharp$ can be made arbitrarily large pretty easily so it is not
    appropriate to call $N/M_\sharp$ the coverage depth and fix it to be a
    constant.  We also notice that $M_\flat$ controls the denominators of the
    fractions \eqref{F}.  So long as $1/M_\flat$ is negligible compared to
    the sampling error \eqref{sqrt}, it contribute very little to the noise.
    Therefore, neither $N/M_\flat$ nor $N/M_\sharp$ is a fundamental
    parameter.  So we avoid referring to $M$ or talking about the ratios
    $\ell / \log M$ and $N/M$.

    Based on the above discussion, let us restate the problem as follows.
    There is a frequency vector of the form \eqref{F} whose resolution
    $M_\flat$ is high enough that the uncertainty \eqref{sqrt} dominates the
    quantization error.  The sequencer samples $N$ independent copies from a
    very large ($M_\sharp$) pool so it is essentially a with-replacement
    sampling of $F$.  The decoder is given noiseless samples $X_1, \dotsc,
    X_N$ or noisy samples $Y_1, \dotsc, Y_N$, and wants to know which $F$ was
    used to generate them.  We are interested in the maximum number of
    different $F$'s the decoder can tell apart with high probability.  The
    parameters $\ell$ and $N$ approach infinity with $\ell / \log_4 N$ fixed
    and strictly less than $1$.

\section{The Uncoded Scheme}                              \label{sec:uncoded}

    In this section, we review Tamir, Weinberger, and Guillén i Fàbregas's
    recently proposed low-complexity scheme \cite{TWG25}.

\subsection{Sender Side}

    We start with letting $\rho \in [0, 1]$ be a free parameter.  Consider
    the following arithmetic progression
    \begin{equation}
        4^{-\ell-\rho\ell},
        3 \cdot 4^{-\ell-\rho\ell},
        5 \cdot 4^{-\ell-\rho\ell}, \dotsc,
        (2 \cdot 4^{\rho\ell} - 1) \cdot 4^{-\ell-\rho\ell}.       \label{ap}
    \end{equation}
    This progression has $4^{\rho\ell}$ terms, median $4^{-\ell}$, and common
    difference $2 \cdot 4^{-\ell-\rho\ell}$.  Now each string $X \in \{\A,
    \C, \G, \T\}^\ell$ chooses one term from the progression as its frequency
    $F(X)$.  This allows $(4^{\rho\ell})^{4^\ell}$ possible messages, which
    is equivalent to $2 \rho \ell \cdot 4^\ell$ bits per pool, or $2 \rho
    \ell$ bits per string.

    Note that not all choices of $F$ will end up being a probability measure,
    i.e., $\sum_X F(X)$ needs to be $1$.  Tamir et al.\ resolved this by
    forcing all terms of the arithmetic progression \eqref{ap} to be selected
    equally often.

\subsection{Receiver Side}

    At the receiver side, each string $X$ has an empirical count $E(X) \in
    [0, N]$.  We can round $E(X) / N$ to the nearest term in the arithmetic
    progression \eqref{ap} and let it be $\hat F(X)$.  The problem boils down
    to estimating the probability that $\hat F(X) \ne F(X)$.

    We see that $E(X)$ follows the multinomial distribution with $N$
    ``balls'' and $4^\ell$ ``bins''.  The way Tamir et al.\ handled this is
    more involved.  But we claim, and will provide a rigorous proof later,
    that it suffices to model each $E(X)$ as an independent Poisson
    distribution with mean $F(X) N$.  The largest acceptable deviation that
    still leads to $\hat F(X) = F(X)$ is $|E(X) / N - F(X)| <
    4^{-\ell-\rho\ell}$, half of the common difference of \eqref{ap}, or
    $|E(X) - F(X) N| < 4^{-\ell-\rho\ell} N$.

    Now, intuitively speaking, if one accepts the Poisson model, then its
    tail bound $\exp (-\Omega(\text{deviation}^2/\text{mean}))$ implies that
    an $\hat F(X)$ differs from $F(X)$ with probability
    \[  
        \exp \Bigl( -\Omega \Bigl(
            \frac{4^{-2\ell-2\rho\ell} N^2}{4^{-\ell} N} \Bigr) \Bigr)
        = \exp \bigl( -\Omega(4^{-\ell-2\rho\ell} N) \bigr).
    \]
    Proceeding with the union bound over all $X$, the probability that $\hat
    F(X) \ne F(X)$ occurs for at least one $X$ is at most $4^\ell
    \exp(-\Omega(4^{-\ell-2\rho\ell} N))$.  This quantity is about $1$ when
    $\ell + 2\rho\ell = \log_4 N$.  Hence, the maximum number of bits each
    string can carry is $2\rho\ell = \log_4 N - \ell$, as claimed in
    Theorem~\ref{thm:noiseless}.

    At this point, readers might wonder why the previous work \cite{TWG25}
    achieved optimality only when $N < 4^{3\ell}$.  The reason is that the
    free parameter $\rho$ is assumed to be $\le 1$, and so the actual number
    of bits each string carries is $2\ell \cdot \min(1, \rho) = \min(2\ell,
    \log_4 N - \ell)$, wherein $2\ell$ is the bottleneck when $N >
    4^{3\ell}$.  The root of this assumption is that they want every term of
    the arithmetic progression \eqref{ap} to be selected by at least one
    $F(X)$, which forces $4^{\rho\ell} \le 4^\ell$.

    In the next section, we lift the restriction that each term is selected
    by at least one $F(X)$ and provide a Poisson-based rigorous analysis that
    applies to all $\varrho > 0$.

\section{Frequency Coding without Noise}                \label{sec:noiseless}

    In this section we prove Theorem~\ref{thm:noiseless} using a coded
    scheme.  Instead of parameterizing \eqref{ap} with $\rho$, we fix a small
    $\varepsilon > 0$ and let $\varrho$ be the solution to $2 (\varrho +
    \varepsilon) \ell + \ell = \log_4 N$.

\subsection{Sender Side}

    Prepare a Reed--Solomon (RS) code over the finite field of size
    $4^{\varrho\ell}$.  The code length should be $4^\ell$ and the code rate
    is $1 - \varepsilon$.  Note that the code length should not exceed the
    field size, so for the case $\varrho < 1$, we repeat the full-length RS
    code $4^{\ell-\varrho\ell}$ times and protect the RS codewords using an
    outer RS code with rate $1 - \varepsilon$.  We still call the resulting
    code a RS code for brevity.

    For the quantization levels, use the arithmetic progression
    \begin{equation}
        4^{-\ell-\varrho\ell},
        2 \cdot 4^{-\ell-\varrho\ell},
        3 \cdot 4^{-\ell-\varrho\ell}, \dotsc
        4^{\varrho\ell} \cdot 4^{-\ell-\varrho\ell}               \label{ap2}
    \end{equation}
    This progression has $4^{\varrho\ell}$ terms (the same as the finite
    field), median about $4^{-\ell}/2$, and common difference
    $4^{-\ell-\varrho\ell}$.  Naturally, each string $X$ chooses a frequency
    from \eqref{ap2} to represent one symbol of the RS codeword.

    Accordingly, each RS codeword, and hence each DNA pool, carries $2 (1 -
    \varepsilon) \varrho \ell \cdot 4^\ell$ bits of information (or $2 (1 -
    \varepsilon)^2 \varrho \ell \cdot 4^\ell$ if the outer code is used).
    This is equivalent to $2 (1 - \varepsilon) \varrho \ell$ bits per string
    (or $2 (1 - \varepsilon)^2 \varrho \ell$ if the outer code is used).
    This matches the goal claimed in Theorem~\ref{thm:noiseless} after
    setting $\varepsilon \to 0$.

    Note that $\sum_X F(X)$ is a sum of $4^\ell$ numbers each less than or
    equal to the last term of \eqref{ap2}, $4^{-\ell}$.  Therefore the
    function $F$ is almost never a probability measure.  To fix this, we ask
    $X = \A\A \dotsm \A\A$ to carry the remaining frequency, i.e., force
    $F(\A^\ell)$ to be $1 - \sum_{X\ne\A^\ell} F(X)$.  This means that the
    symbol carried by $\A^\ell$ is almost always wrong, but that is within
    the error-correcting capability of the RS code.

\subsection{Receiver Side---Resampling}

    At the receiver side, each string $X$ has an empirical count $E(X) \in
    [0, N]$.  As claimed before, we want to model $E(X)$ as a Poisson
    distribution with mean $F(X) N$.  This is made mathematically rigorous by
    the following coupling argument.

    \begin{lemma}                                         \label{lem:Poisson}
        Suppose that $E$ follows a multinomial distribution with $N$ balls
        and $4^\ell$ bins, where the bin probabilities are given by the
        frequency vector $F$.  There exist random variables $\bar E(X)$ that
        follow independent Poisson distributions with mean $F(X) N / 2$, and
        such that the probability that $\bar E(X) > E(X)$ for any $X$ is
        negligibly small.
    \end{lemma}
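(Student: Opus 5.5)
The plan is the standard Poissonization coupling.

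Let $K$ be a Poisson random variable with mean $N/2$. Sample $K$ i.i.d.\ draws from $F$, independently of everything else, and let $\bar E(X)$ be the number of those draws landing on $X$. By the splitting (thinning) property of the Poisson distribution, the counts $\bar E(X)$ are independent Poisson random variables with means $F(X) N/2$. This holds without any restriction on $F$.

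To realize the coupling, take an infinite i.i.d.\ sequence $X_1, X_2, \dotsc$ drawn from $F$, independent of $K$. Define
\[ E(X) = \#\{n \le N : X_n = X\}, \qquad \bar E(X) = \#\{n \le K : X_n = X\}. \]
Then $E$ has exactly the multinomial law of the statement. The counts $\bar E$ have the Poisson law above, because $K$ is independent of the sequence. Moreover, on the event $K \le N$ the prefix of length $K$ is contained in the prefix of length $N$, so $\bar E(X) \le E(X)$ holds for every $X$ simultaneously.

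It therefore remains to bound
\[ \PP\bigl(\exists X\colon \bar E(X) > E(X)\bigr) \le \PP(K > N). \]
Apply the Chernoff bound for a Poisson variable with mean $\lambda = N/2$ at level $2\lambda$:
\[ \PP(K \ge 2\lambda) \le \exp\bigl(-\lambda(2\ln 2 - 1)\bigr) = \exp(-\Omega(N)). \]
Since $N \to \infty$, and indeed $N > 4^\ell$ grows faster than any quantity the union bounds later multiply by (such as $4^\ell$), this failure probability is negligible. It is absorbed into the overall error exponent, which decays exponentially in $4^\ell$.

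The only delicate point is ensuring that the coupling is a single probability space carrying all three objects: the multinomial $E$, the independent Poissons $\bar E$, and the ordering $\bar E \le E$. The shared-prefix construction handles this directly, so no real obstacle remains.

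The halving of the mean to $F(X)N/2$ is the price of making $K \le N$ overwhelmingly likely. Later analysis can use $\bar E$ as a stochastic lower bound, since $E(X) \ge \bar E(X)$. If a two-sided comparison is needed, the same construction with $K' \sim \mathrm{Poisson}(2N)$ gives upper domination, with failure probability $\PP(K' < N) = \exp(-\Omega(N))$.
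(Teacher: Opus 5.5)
Your proof is correct and is essentially the paper's argument. The paper realizes the coupling with a rate-one Poisson point process on $[0,1]\times[0,\infty)$, taking the lowest $N$ points for $E$ and all points below $y=N/2$ for $\bar E$; ordered by height, this is exactly your i.i.d.\ sequence with an independent $\mathrm{Poisson}(N/2)$ cutoff $K$, and the failure event (the $N$th point lying below $y=N/2$, i.e.\ $K\ge N$) is controlled by the same $\exp(-\Omega(N))$ Poisson tail bound.
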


    \begin{IEEEproof}
        Consider the Poisson point process $P \subset [0, 1] \times [0,
        \infty)$ with a rate of one point per unit area.  We want to show
        that both $E$ and $\bar E$ can be generated from $P$ to facilitate
        comparison.  To do so, let $f\colon [0, 1] \to \{\A, \C, \G,
        \T\}^\ell$ be a lookup function that maps $x$-coordinates to strings
        and respects the frequency, i.e., $|f^{-1}(X)| = F(X)$.

        For $E$, we sort points $(x_i, y_i)$ in $P$ by their $y$-coordinates
        and pick the first $N$ points.  We then generate the statistics $E$
        by counting how many of the $N$ points fall into each $f^{-1}(X)$.
        It is easy to see that $E$ follows the multinomial distribution.

        For $\bar E$, we collect all points $(x_i, y_i)$ in $P$ below the
        line $y = N/2$.  We then generate the statistics $\bar E$ by counting
        how many of these points fall into each $f^{-1}(X)$.  It is easy to
        see that the variables $\bar E(X)$ follow independent Poisson
        distributions with means $F(X) N / 2$.  This is the first property we
        want $\bar E$ to have.

        Now, note that $\bar E$ is roughly half as greedy as $E$.  This is to
        ensure that the $N$th point $(x_N, y_N)$ of $P$ is above the line $y
        = N/2$ most of the time.  When this is the case, all points of $P$
        that contribute to $\bar E$ also contribute to $E$.  Hence $\bar E(X)
        \le E(X)$ for all $X$.  The probability of the opposite event, that
        $y_N < N/2$, is the probability that a Poisson random variable with
        mean $N/2$ is at least $N$, which is $\exp(-\Omega(N))$ by Poisson's
        tail bound.
    \end{IEEEproof}

    The quantity $\exp(-\Omega(N))$ will be counted toward the overall error
    probability, so we may safely assume that $\bar E(X) \le E(X)$ for all
    $X$ for the rest of the proof of Theorem~\ref{thm:noiseless}.  Now that
    we have the existence of $\bar E$, we next want to generate $\bar E$ from
    $E$ constructively.

    \begin{lemma}
        There is a low-complexity algorithm that turns $E$ into $\bar E$
        so that all $\bar E(X)$ are mutually independent Poisson with mean
        $F(X) N / 2$ up to total variation distance $\exp(-\Omega(N))$.
    \end{lemma}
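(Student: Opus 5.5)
The plan is to build the algorithm by reversing the coupling in Lemma~\ref{lem:Poisson} \emph{at the level of the total count}, so that it never needs to know $F$. The receiver holds only $E$, which is the histogram of the first $N$ points of the Poisson process. In the proof of Lemma~\ref{lem:Poisson}, $\bar E$ is the histogram of the first $K$ points, where $K$ is the number of points below $y = N/2$; hence $K \sim \mathrm{Poisson}(N/2)$. Conditionally on $K$, the labels of those points are i.i.d.\ draws from $F$. This suggests the following algorithm:
\begin{itemize}
\item draw $K \sim \mathrm{Poisson}(N/2)$ using private randomness;
\item if $K \le N$, choose a uniformly random subset of $K$ of the $N$ reads and let $\bar E(X)$ count how many of them equal $X$;
\item if $K > N$, declare failure, or output anything.
\end{itemize}
Its cost is $O(N)$ operations (a random subset, or a reservoir-style pass) plus $4^\ell$ counters, so it is low-complexity.

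The correctness argument has three steps.

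\textbf{Step 1.} Since the $N$ reads are i.i.d.\ with law $F$, a uniformly random $K$-subset of them, conditionally on $K \le N$, is again $K$ i.i.d.\ draws from $F$. This holds because exchangeability makes subsampling without replacement preserve the i.i.d.\ structure, and $K$ is independent of the reads. Hence, on the event $\{K \le N\}$, the output $\bar E$ is a multinomial with $K$ balls and bin probabilities $F$.

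\textbf{Step 2.} I invoke Poissonization: if $K \sim \mathrm{Poisson}(\lambda)$ and, given $K$, the counts are multinomial with probabilities $F$, then the counts $\bar E(X)$ are independent with $\bar E(X) \sim \mathrm{Poisson}(F(X)\lambda)$. One can verify this directly with the probability generating function $\EE \prod_X z_X^{\bar E(X)} = \exp\bigl(\lambda \sum_X F(X)(z_X - 1)\bigr)$, which factorizes over $X$. Alternatively, it is already implicit in the point-process picture of Lemma~\ref{lem:Poisson}. Taking $\lambda = N/2$ gives the target law exactly, for the idealized process in which $K$ is never truncated.

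\textbf{Step 3.} The only discrepancy between the algorithm and the ideal process is the event $\{K > N\}$. The total variation distance between the algorithm's output and the ideal independent-Poisson vector is therefore at most $\PP(K > N)$. By the Poisson (Chernoff) tail, this is $\exp(-\Omega(N))$, the same estimate used at the end of Lemma~\ref{lem:Poisson}.

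As a bonus, $\bar E(X) \le E(X)$ holds deterministically for every $X$, because we only ever discard reads. This recovers the monotone coupling of Lemma~\ref{lem:Poisson} constructively.

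The main obstacle is the rigor of Step 1 together with the truncation. I need the subsampled counts, jointly with an independent $K$, to reproduce exactly the compound law of Step 2 on $\{K \le N\}$, so that the truncation error is the \emph{only} error term. I would handle this by comparing with a coupled ideal experiment. In that experiment the reads are extended to an infinite i.i.d.\ sequence, and $\bar E$ is taken to be the histogram of a uniformly random $K$-subset when $K \le N$, and of the first $K$ reads otherwise. Its output is exactly independent Poisson, and it agrees with the algorithm's output whenever $K \le N$, which bounds the total variation distance by $\PP(K > N)$.
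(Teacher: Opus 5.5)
Your proposal is correct and is essentially the paper's algorithm. The paper simulates the heights $y_1 < y_2 < \dotsb$ with independent exponential gaps, draws each label from $E$ without replacement, and stops once $y_i > N/2$; this has the same distribution as your direct draw of $K \sim \mathrm{Poisson}(N/2)$ followed by a uniformly random $K$-subset of the reads, and the failure event (a $\mathrm{Poisson}(N/2)$ count reaching $N$, probability $\exp(-\Omega(N))$) is the same. Your generating-function Poissonization and the coupled ideal experiment spell out a correctness argument that the paper leaves implicit in the point-process picture of Lemma~\ref{lem:Poisson}.
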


    \begin{IEEEproof}
        We borrow the context of the previous proof.  The statistics $E$
        includes the first $N$ points of $P$ and the targeted statistics
        $\bar E$ includes all points below the line $y = N/2$.  So a
        straightforward blueprint is to generate $(x_i, y_i)$ for $i = 1, 2,
        \dotsc, N$, and stop when $y_i > N/2$.

        More precisely, the first point $(x_1, y_1)$ is such that $f(x_1)$ is
        a random string sampled from $E$ and $y_1$ is an exponential random
        variable with mean $1$.  We subtract $f(x_1)$ from $E$ and proceed to
        the next point.  The second point $(x_2, y_2)$ is such that $f(x_2)$
        is a random string sampled from the modified $E$, and $y_2 - y_1$ is
        another independent exponential random variable.  We repeat this
        process until $y_i > N/2$.

        The cost of doing what is above is updating and sampling $E$ $N$
        times and sampling an exponential random variable $N$ times.
    \end{IEEEproof}

    \begin{remark}
        An alternative approach for the lemma above is ``binary search''.
        More precisely, we can break $E$ into two parts $E'$ and $E''$ by
        generating, for each string $X$, $E'(X)$ and $E''(X)$ using
        independent binomial distributions with $E(X)$ balls and two bins of
        equal probability.  Denote by $|E'|$ the total number of balls in
        $E'$.  We then generate $y_{|E'|}$ and check if it is $> N/2$.  If
        so, it means that $E'$ contains too many points and we need to divide
        it further.  Conversely, if $y_{|E'|} < N/2$, it means that we need
        to grab more points from $E''$, as $E'$ alone is not enough to reach
        the line $y = N/2$.  By applying the same procedure recursively on
        $E'$ or $E''$, we halve the size of the statistics at each step and
        will terminate in $O(\log N)$ steps.  Each step needs $O(4^\ell)$
        time to split the statistics.
    \end{remark}

    To conclude this subsection, we demonstrate how to generate variables
    $\bar E(X)$ that follow independent Poisson distributions with mean $F(X)
    N / 2$ up to a negligible failure probability $\exp(-\Omega(N))$.  We
    then use $\bar E$ to guess $F$ in the next subsection.

\subsection{Receiver Side---Decoding}

    Following the same logic as in Section~\ref{sec:uncoded}, we now round $2
    \bar E(X) / N$ to the nearest term in the arithmetic progression
    \eqref{ap2} and let it be $\hat F(X)$.  It remains to check the
    probability of $\hat F(X) \ne F(X)$ for each $X \ne \A^\ell$.

    Each equality $\hat F(X) = F(X)$ fails to hold if $|2 \bar E(X) / N -
    F(X)| \ge 4^{-\ell-\varrho\ell} / 2$.  This is equivalent to controlling
    the deviation of a Poisson random variable with mean $F(X) N / 2$ within
    $4^{-\ell-\varrho\ell} N / 4$.  For this, we borrow the following tail
    bound.

    \begin{lemma}
        Let $Z$ be a Poisson random variable with mean $\lambda$.  For any $t
        > 0$,
        \[
            \PP \{|Z - \lambda| \ge t\} \le
            2 \exp \Bigl( \frac{-t^2}{2 (\lambda + t)} \Bigr).
        \]
    \end{lemma}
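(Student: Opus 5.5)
The plan is the standard Chernoff (exponential moment) argument, applied separately to the upper tail $\PP\{Z \ge \lambda + t\}$ and the lower tail $\PP\{Z \le \lambda - t\}$; a union bound over the two then gives the factor $2$. The only input is the moment generating function of the Poisson distribution, $\EE e^{sZ} = \exp(\lambda(e^s - 1))$ for all real $s$.

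\textbf{Upper tail.} For $s > 0$, Markov's inequality gives
\[
    \PP\{Z \ge \lambda + t\} \le \exp\bigl(\lambda(e^s - 1 - s) - st\bigr).
\]
Take $s = \log(1 + t/\lambda)$. This yields the exact Poisson Chernoff bound $\exp(-\lambda h(t/\lambda))$, where $h(u) = (1+u)\log(1+u) - u$. Then invoke the elementary inequality
\[
    h(u) \ge \frac{u^2}{2(1 + u/3)} \ge \frac{u^2}{2(1+u)}, \qquad u \ge 0.
\]
Substituting $u = t/\lambda$ gives $\lambda h(u) \ge t^2/(2(\lambda + t))$, which is the desired exponent. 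The inequality on $h$ is the Bernstein-type step and is the only part that needs care. I would prove the weaker form $h(u) \ge u^2/(2(1+u))$ directly. Let $g(u) = h(u) - u^2/(2(1+u))$. Then $g(0) = g'(0) = 0$, so it suffices to show $g' \ge 0$ for $u \ge 0$. Since $h'(u) = \log(1+u)$, this amounts to
\[
    \log(1+u) \ge \frac{u(u+2)}{2(1+u)^2}.
\]
Both sides vanish at $u = 0$. The left side has derivative $1/(1+u)$ and the right side has derivative $1/(1+u)^3$, so the left side dominates for all $u \ge 0$.

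\textbf{Lower tail.} If $t > \lambda$, the probability is zero and there is nothing to prove. Otherwise, use $s < 0$ in the same Markov argument, with $s = \log(1 - t/\lambda)$. This gives the exponent $-\lambda h(-u)$ with $u = t/\lambda \in (0, 1]$. Here I would use the fact that $h(-u) \ge u^2/2$ on $[0, 1]$. To see this, note that the function $h(-u) - u^2/2$ vanishes at $0$ and has derivative $-\log(1-u) - u \ge 0$. Hence $\lambda h(-u) \ge t^2/(2\lambda) \ge t^2/(2(\lambda + t))$.

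Adding the two tail bounds gives the stated factor $2$. I expect the main obstacle to be only the calculus inequality for $h$; everything else is routine.
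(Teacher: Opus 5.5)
Your proof is correct. The paper gives no proof of this lemma; it simply ``borrows'' it as a standard Poisson (Bernstein-type) tail bound. So your Chernoff argument is a self-contained substitute for a citation rather than a competing route.

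The steps all check out:
\begin{itemize}
\item the optimized exponent $\lambda h(t/\lambda)$ with $h(u)=(1+u)\log(1+u)-u$;
\item the derivative comparison $1/(1+u)\ge 1/(1+u)^3$, which gives $h(u)\ge u^2/(2(1+u))$;
\item the lower-tail exponent $\lambda h(-u)$ (for $s<0$ the Markov step produces $\lambda(e^s-1-s)+st$, with the sign of $st$ flipped, and your final expression is consistent with that);
\item the bound $h(-u)\ge u^2/2$ via $-\log(1-u)\ge u$.
\end{itemize}

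There are two boundary nits. First, at $t=\lambda$ (i.e.\ $u=1$, which your range $(0,1]$ includes), the choice $s=\log(1-t/\lambda)$ is not a real number. Either let $s\to-\infty$, where the Chernoff bound tends to $e^{-\lambda}$, or note directly that $\PP\{Z\le 0\}=e^{-\lambda}\le e^{-\lambda/4}$. Second, you implicitly assume $\lambda>0$; the case $\lambda=0$ is trivial.

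What your derivation adds over the paper's citation is transparency about the constants. Both tails satisfy stronger bounds: $t^2/(2\lambda)$ for the lower tail and $t^2/(2(\lambda+t/3))$ for the upper tail. The stated form is therefore loose, but that is harmless, because the paper only uses it as $\exp(-\Omega(t^2/(\lambda+t)))$.
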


    Using the lemma with $\lambda = F(X) N / 2$ and $t =
    4^{-\ell-\varrho\ell} N / 4$, we immediately see that
    \[ 
        \PP \{\hat F(X) \ne F(X)\} \le 2 \exp \Bigl( -\Omega \Bigl(
            \frac{4^{-2\ell-2\varrho\ell} N^2}{4^{-\ell} N} \Bigr) \Bigr),
    \]
    where the denominator uses the last term of \eqref{ap2} to upper bound
    $F(X)$.  The exponent simplifies to $-\Omega(4^{-\ell-2\varrho\ell} N)$.
    Recall that we choose $\varrho = (\log_4 N - \ell) / 2\ell -
    \varepsilon$, so the probability of each $\hat F(X) \ne F(X)$ is at most
    $\exp(-\Omega(4^{2\varepsilon\ell}))$.

    With per-symbol error probability this low, the RS decoder fails to
    recover the message with probability at most
    $\exp(-\Omega(4^{(1+2\varepsilon)\ell}))$ by bounding the probability
    that the fraction of bad symbols exceeds $\varepsilon/3$ using the
    standard Chernoff bound.  A similar computation can be performed if the
    outer code is used and the resulting error probability is similar.  Note
    that we need independence across different $X$'s for this step, which is
    why we spent considerable effort on the Poisson point process.

    This concludes the proof of Theorem~\ref{thm:noiseless}.  We have
    designed a low-complexity scheme that achieves $\log_4 N - \ell$ bits per
    string for the entire short-strand regime.

\section{Frequency Coding with Noise}                       \label{sec:noise}

    In this section, we prove Theorem~\ref{thm:noisy}.  The main technique is
    to move to the Fourier domain because the channel (recall that it is
    \[
        V(y|x) = \bca{
            1 - p, & if $y = x$, \\
            p/3, & if $y \ne x$.
        }
    \]
    and is symmetric) is diagonalized by the Fourier transform.  This allows
    us to design exactly the right quantization levels for every Fourier
    mode.

    More precisely, $V$ in \eqref{channel} can be diagonalized as
    \[
        H D H =
        \frac12 \sma{
            1 & 1 & 1 & 1\\
            1 & 1 & -1 & -1\\
            1 & -1 & 1 & -1\\
            1 & -1 & -1 & 1
        } \cdot
        \sma{
            1 & & & \\
            & \delta & & \\
            & & \delta & \\
            & & & \delta
        } \cdot
        \frac12 \sma{
            1 & 1 & 1 & 1\\
            1 & 1 & -1 & -1\\
            1 & -1 & 1 & -1\\
            1 & -1 & -1 & 1
        }
    \]
    for $\delta \coloneqq 1 - 4p/3 > 0$.  This implies that $W =
    V^{\otimes\ell}$ can be diagonalized as $H^{\otimes\ell} D^{\otimes\ell}
    H^{\otimes\ell}$.  In particular, if we consider the representation
    vector $G \coloneqq H^{\otimes\ell} F$ in the Fourier domain, then the
    Fourier representation of $WF$ is $D^{\otimes\ell} G$, meaning that it is
    now a coordinate-wise contraction.

    We now describe our scheme.

\subsection{Sender Side}

    The sender will put information on the Fourier vector $G$.  But the
    choice is subject to three constraints.  First, the frequency vector $F$
    obtained by $H^{\otimes\ell} G$ must be nonnegative.  Second, $F$ must be
    a probability measure, so the leading coefficient of $G$ is fixed to be
    $2^{-\ell}$.  Third, the quantization levels of $G$ must be adaptive to
    the channel attenuation $\delta$.  We claim, and will verify on the
    receiver side, that after moving to the Fourier domain, all but a small
    fraction of coordinates have count noise at most $\theta \coloneqq \tau
    \sqrt{N / 4^\ell}$, where $\tau \to \infty$ grows very slowly compared to
    $\ell$ and $N$.  Therefore, if a Fourier coefficient is to be multiplied
    by $N \delta^k$, then a spacing of order $\theta / (N \delta^k)$ is
    enough for correct rounding.

    For nonnegativity, let us limit each non-leading Fourier coefficient to
    $[0, 8^{-\ell}]$.  Each mass of $F$ is a sum of $4^\ell$ Fourier
    coefficients divided by the normalizer of the Hadamard matrix $2^{\ell}$.
    This ensures that each $F(X)$ deviates from the uniform distribution's
    mass $4^{-\ell}$ by at most $4^\ell \cdot 8^{-\ell} / 2^{\ell} =
    4^{-\ell}$, implying that they are all nonnegative.

    Next, we analyze the effect of multiplying the Fourier coefficients by
    $\delta$.  With the above threshold, a degree-$k$ coefficient (a
    coefficient that is multiplied by $N \delta^k$) can provide about
    $8^{-\ell} N \delta^k / \theta = \delta^k \sqrt N / (4^\ell \tau)$
    choices to encode information and can carry
    \begin{equation}
        \log_4 N + k \log_2 \delta - 2\ell - \log_2 \tau         \label{bits}
    \end{equation}
    bits if this quantity is positive, and we do not lose any bit if it is
    negative.
    
    The total number of bits all coefficients can carry is thus at least
    \begin{multline*}
        \sum_{k=1}^\ell \binom{\ell}{k} 3^k \cdot \eqref{bits} \approx
        \sum_{k=0}^\ell \binom{\ell}{k} 3^k \cdot \eqref{bits}
        \\ = 4^\ell (\log_4 N - 2\ell - \log_2 \tau)
        + 3 \ell 4^{\ell-1} \log_2 \delta.
    \end{multline*}
    The second term is exactly $\log_2 \det W$ because the eigenvalues of $W$
    are $\delta^k$ with multiplicity $\binom{\ell}{k}3^k$.  The rate we lose
    for not utilizing the $k = 0$ coefficient (to ensure that $F$ is a
    probability measure) is $o(1)$ of the other terms.  As a result, the
    number of bits per pool is $(1 \pm o(1))$-fold within $4^\ell (\log_4 N -
    2\ell - \log_2 \tau) + \log_2 \det W$.  Since $\tau$ grows slowly, the
    $\log_2 \tau$ term is negligible and the remainder matches the
    information rate stated in Theorem~\ref{thm:noisy}.

\subsection{Receiver Side}

    Let $E(X)$ be the number of times $Y_i = X$.  Let $\bar E$ be the iid
    Poisson random variables obtained by the recipe of
    Section~\ref{sec:noiseless}-B: $\bar E(X) \sim \operatorname{Poisson}(N
    Q(X) / 2)$, where $Q \coloneqq W F$ is the post-channel frequency vector.
    Equivalently, $2 \bar E \eqqcolon N Q + Z$.  This $Z$ represents the
    centered sampling noise; each $Z(X)$ has mean zero and variance $2 N
    Q(X)$.  So the expected norm squared is $\EE \|Z\|_2^2 = \sum_X 2 N Q(X)
    = 2 N$.  Since $\|Z\|_2^2$ is a sum of independent random variables, it
    concentrates around its mean $2 N$ with high probability (see the next
    subsection for quantitative bounds).

    We now switch to the Fourier domain,
    \[
        H^{\otimes\ell} (2 \bar E)
        = N H^{\otimes\ell} Q + H^{\otimes\ell} Z
        = N D^{\otimes\ell} G + H^{\otimes\ell} Z.
    \]
    By Parseval, $\|H^{\otimes\ell} Z\|_2$ is still $\|Z\|_2$.  Thus the
    Fourier transform does not reduce the sampling noise, nor does it amplify
    it.  It merely redistributes the noise across different Fourier
    coefficients.

    This redistribution cannot damage too many Fourier coordinates at once.
    Indeed, let $Y \coloneqq H^{\otimes\ell} Z$.  If $\|Y\|_2^2 \le 4 N$,
    which is likely to happen by concentration, then for any threshold
    $\theta > 0$,
    \[
        \bigl| \bigl\{X: |Y(X)| > \theta \bigr\} \bigr|
        \le \frac{4 N}{\theta^2}.
    \]
    Since there are $4^\ell$ Fourier coordinates, choosing $\theta =
    \tau\sqrt{N / 4^\ell}$ gives at most $(4/\tau^2) 4^\ell$ bad coordinates.
    Note that these are not iid errors, so we model them as adversarial
    errors.  That said, they can still be countered by a capacity-achieving
    code designed for BSC$(O(1/\tau^2))$ via a random shuffling argument.
    The code rate we lose is negligible if $\tau \to \infty$ (see the next
    subsection for details).

    The receiver therefore treats each Fourier coordinate separately.  In
    degree $k$, the channel multiplies the corresponding coefficient of $G$
    by $N \delta^k$. Hence a spacing of order $\theta / (N \delta^k)$ in that
    coordinate is enough for correct rounding, except on the coordinates
    where $|Y(X)| > \theta$.  The sender-side quantization was chosen with
    exactly this spacing, and the extra $\log_2 \tau$ bits per coordinate are
    negligible because $\tau$ grows slowly.  This means that the receiver can
    decode successfully with high probability, and finishes the proof of
    Theorem~\ref{thm:noisy}.

\subsection{Error Probability Analysis}

    Throughout the proof in this section, there are three sources that may
    cause decoding failure.  The first source is the rare event that $\bar E$
    is not less than $E$ entry-wise.  By Lemma~\ref{lem:Poisson}, this
    contributes $\exp(-\Omega(N))$ and is minor compared to the other
    sources.

    The second source is that $\|Z\|_2^2$ may be too large.  Notice that the
    situation here is somewhat classical in concentration literature but not
    entirely.  On the one hand, $Z$ is a vector of independent random
    variables, but the tail of Poisson is not sub-Gaussian, only
    sub-exponential.  On the other hand, we are not interested in the sum of
    random variables, but their squares.  Luckily, a black-box bound is
    available in the literature.

    \begin{proposition}
        Let $Z_1, \dotsc, Z_n$ be independent random variables satisfying
        $\EE Z_i = 0$ and $\EE \exp(|Z_i| / M) \le 2$ for every $i$.  Let $A$
        be a symmetric $n \times n$ matrix.  Then, for every $t > 0$,
        \begin{multline*}
            \PP\biggl\{ \biggl|
                \sum_{i,j} a_{ij} Z_i Z_j - \text{expectation}
            \biggr| \ge t \biggr\}
            \\ \le 2 \exp\biggl( -\Omega \biggl( \min \biggl\{
                \frac{t^2}{M^4 \|A\|_\mathrm{HS}^2},
                \sqrt{\frac{t}{M^2 \|A\|_\mathrm{op}}}
            \biggr\} \biggr) \biggr).
        \end{multline*}
        For a proof, see \cite[Proposition~1]{GSS21} with $\alpha = 1$.
    \end{proposition}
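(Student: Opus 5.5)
The plan is to work with moments and convert to tails at the end. By homogeneity we may rescale so that $M = 1$. The target is the moment estimate
\[
    \Bigl\| \sum_{i,j} a_{ij} Z_i Z_j - \text{expectation} \Bigr\|_{L^p}
    \le C \bigl( \sqrt p \, \|A\|_\mathrm{HS} + p^2 \|A\|_\mathrm{op} \bigr)
    \qquad \text{for all } p \ge 1.
\]
Markov's inequality at the optimal $p$ then gives the claimed tail. The regime $p \approx t^2 / \|A\|_\mathrm{HS}^2$ produces the first branch of the minimum, and the regime $p \approx (t / \|A\|_\mathrm{op})^{1/2}$ produces the second. We split the quadratic form into its diagonal part $\sum_i a_{ii} (Z_i^2 - \EE Z_i^2)$ and its off-diagonal part $\sum_{i \ne j} a_{ij} Z_i Z_j$, and bound each separately.

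\emph{Diagonal part.} The assumption $\EE \exp |Z_i| \le 2$ means that each $Z_i^2$ has a sub-Weibull tail of order $\tfrac12$, that is, $\PP\{Z_i^2 > s\} \le 2 e^{-\sqrt s}$. The centered variables $W_i \coloneqq Z_i^2 - \EE Z_i^2$ are independent, mean zero, and satisfy $\|W_i\|_{L^p} \lesssim p^2$. A standard moment inequality for sums of independent heavy-tailed variables (Lata{\l}a's, or Rosenthal's inequality combined with the bound on $\|\max_i |a_{ii} W_i|\|_{L^p}$) then yields
\[
    \Bigl\| \sum_i a_{ii} W_i \Bigr\|_{L^p}
    \lesssim \sqrt p \, \Bigl( \sum_i a_{ii}^2 \Bigr)^{1/2} + p^2 \max_i |a_{ii}|.
\]
This is dominated by the target, since $\sum_i a_{ii}^2 \le \|A\|_\mathrm{HS}^2$ and $\max_i |a_{ii}| \le \|A\|_\mathrm{op}$.

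\emph{Off-diagonal part.} First apply decoupling (de la Pe\~na--Montgomery-Smith) to replace $\sum_{i \ne j} a_{ij} Z_i Z_j$ by $\sum_{i,j} a_{ij} Z_i Z'_j$, where $Z'$ is an independent copy of $Z$; this costs only a constant factor in $L^p$. Next, symmetrize so that $Z_i$ may be replaced by $\varepsilon_i |Z_i|$. Then use a comparison (contraction) argument to dominate each $|Z_i|$ by a standard exponential variable, up to constants. This reduces the problem to the chaos $\sum_{i,j} a_{ij} \xi_i \xi'_j$ with symmetric exponential $\xi$, and it suffices to bound its moments.

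For that, condition on $\xi'$. The inner sum is then a linear form in $\xi$ with coefficient vector $b = A \xi'$, and the exponential moment bound gives
\[
    \|\langle b, \xi \rangle\|_{L^p} \lesssim \sqrt p \, \|b\|_2 + p \|b\|_\infty .
\]
Integrating over $\xi'$ leaves the moments of $\|A \xi'\|_2$ and $\|A \xi'\|_\infty$ to control. For the first, $\|A\xi'\|_2$ is a Lipschitz convex function of $\xi'$ with Lipschitz constant $\|A\|_\mathrm{op}$, so by Talagrand's concentration for exponential vectors,
\[
    \|\|A\xi'\|_2\|_{L^p} \lesssim \|A\|_\mathrm{HS} + p \|A\|_\mathrm{op}.
\]
For the second, $\|A\xi'\|_\infty \le \|A\xi'\|_2$ is crude, and the main obstacle is to organize the cross terms so that no spurious $p^{3/2} \|A\|_\mathrm{HS}$ appears. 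The first thing I would try is to bound the $\sqrt p \, \|b\|_2$ term using the concentration estimate above, which contributes $\sqrt p \, \|A\|_\mathrm{HS} + p^{3/2} \|A\|_\mathrm{op}$, and to bound $p \|b\|_\infty$ via $\|(A\xi')_i\|_{L^p} \lesssim p \|A\|_\mathrm{op}$, which contributes $p^2 \|A\|_\mathrm{op}$. The intermediate term $p^{3/2} \|A\|_\mathrm{op}$ is harmless, because $p^{3/2} \le p^2$ for $p \ge 1$.

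Summing the diagonal and off-diagonal bounds gives the moment estimate above, and Markov's inequality finishes the proof. This is precisely the $\alpha = 1$ case of \cite[Proposition~1]{GSS21}, so in the paper one may cite that result directly.
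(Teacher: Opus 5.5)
\textbf{There is a genuine gap, in the $\|b\|_\infty$ term of the off-diagonal part.} Most of your plan is sound: the moment target $\sqrt p\,\|A\|_\mathrm{HS} + p^2\|A\|_\mathrm{op}$, the diagonal/off-diagonal split, decoupling, symmetrization, contraction to symmetric exponential variables, and Markov at the end. The target moment bound is correct. The paper gives no argument of its own beyond citing \cite[Proposition~1]{GSS21}.

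After conditioning on $\xi'$, you need to control $\bigl\| \|A\xi'\|_\infty \bigr\|_{L^p} = \bigl\| \max_i |(A\xi')_i| \bigr\|_{L^p}$. You bound it by $\max_i \|(A\xi')_i\|_{L^p} \lesssim p\|A\|_\mathrm{op}$. That inequality is false: the $L^p$ norm of a maximum of $n$ variables is not controlled by the largest individual $L^p$ norm.

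For example, let $A$ be the symmetric permutation matrix swapping coordinates $2k-1$ and $2k$. It has zero diagonal and $\|A\|_\mathrm{op}=1$. Then $\|A\xi'\|_\infty = \max_i |\xi'_i|$, whose $L^p$ norm is of order $p + \log n$. The term $p\,\bigl\|\|A\xi'\|_\infty\bigr\|_{L^p}$ is therefore about $p\log n$, which is not $O(p^2\|A\|_\mathrm{op})$ when $p \ll \log n$. The final estimate survives in this example only because $\sqrt p\,\|A\|_\mathrm{HS} = \sqrt{pn}$ absorbs the excess. Your argument never uses $\|A\|_\mathrm{HS}$ for this term, so as written it gives a dimension-dependent bound.

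\textbf{A repair.} Use $\|b\|_\infty \le \|b\|_p$, together with $\|(A\xi')_i\|_{L^p} \lesssim \sqrt p\,\|a_{i\cdot}\|_2 + p\|a_{i\cdot}\|_\infty \le 2p\|a_{i\cdot}\|_2$, where $a_{i\cdot}$ is the $i$th row. For $p \ge 2$ this gives
\[
    \bigl\| \|A\xi'\|_p \bigr\|_{L^p}
    = \Bigl( \sum_i \EE |(A\xi')_i|^p \Bigr)^{1/p}
    \lesssim p \Bigl( \sum_i \|a_{i\cdot}\|_2^p \Bigr)^{1/p}
    \le p\, x^{1-2/p} y^{2/p},
\]
with $x \coloneqq \max_i \|a_{i\cdot}\|_2 \le \|A\|_\mathrm{op}$ and $y \coloneqq \|A\|_\mathrm{HS} \ge x$. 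Put $r \coloneqq y/x$. The resulting contribution $p \cdot \bigl\| \|A\xi'\|_p \bigr\|_{L^p} \lesssim p^2 x r^{2/p}$ splits into two cases:
\begin{itemize}
    \item If $r \le e^p$, then $p^2 x r^{2/p} \le e^2 p^2 x$.
    \item If $r > e^p$, then $p^2 x r^{2/p} = p^2 y\, r^{-(1-2/p)} \le \sqrt p\, y \cdot p^{3/2} e^{-(p-2)} \lesssim \sqrt p\, y$.
\end{itemize}
Either way the term is $\lesssim \sqrt p\,\|A\|_\mathrm{HS} + p^2\|A\|_\mathrm{op}$, and your moment bound follows.

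Alternatively, drop the conditioning argument and quote Lata{\l}a's two-sided moment estimate for decoupled chaos in symmetric exponential variables:
\[
    \sqrt p\,\|A\|_\mathrm{HS} + p\|A\|_\mathrm{op} + p^{3/2}\max\Bigl(\max_i\|a_{i\cdot}\|_2, \max_j\|a_{\cdot j}\|_2\Bigr) + p^2 \max_{i,j}|a_{ij}| .
\]
Every term there is dominated by your target.
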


    To use the proposition, let $A$ be the identity matrix of size $4^\ell$.
    Then $\|A\|_\mathrm{HS}^2 = 4^\ell$ and $\|A\|_\mathrm{op} = 1$.  To
    determine $M$, note that $Q(X) \le 2/4^\ell$ for all $X$ because we know
    that $F(X)$ does not deviate from the mean $4^{-\ell}$ by more than
    $4^{-\ell}$ when we were making sure that $F$ is nonnegative.  Thus each
    $Z(X)$ has variance $O(N/4^\ell)$.  Dividing $Z(X)$ by its standard
    deviation $M \coloneqq O(\sqrt{N/4^\ell})$ puts the fluctuation on a
    constant scale.  With a standard calculation, which we omit, this $M$
    makes $\EE \exp(|Z_i| / M) \le 2$.

    Now let $t = 2 N$ to obtain
    \begin{multline*}
        \PP \{ \|Z\|_2^2 > 4 N \} \\
        \le 2 \exp\bigl( -\Omega( \min\{4^\ell, \sqrt{4^\ell}\}) \bigr)
        = \exp\bigl( -\Omega(2^\ell) \bigr).
    \end{multline*}
    While this is not exponential in $4^\ell$ or $N$, it still suits our
    need.

    The third source of error is that the capacity-achieving code needed to
    protect the Fourier coefficients fails to decode.  In fact, since
    \eqref{bits} is not a constant, it is not a priori clear how to construct
    the code.  To clarify this, we prepare a family of binary codes that
    achieve the capacity of BSC$(C/\tau^2)$, where the constant $C$ will be
    specified later.  For each Fourier coefficient, we assign it randomly to
    $\max(0, \lfloor \eqref{bits} \rfloor)$ bits of a codeword.  The
    assignment needs to be random to turn adversarial errors into random
    ones.  This assignment also loses $O(1)$ bits per coefficient, which will
    be absorbed by $(1 - o(1))$.  Next, we note that good capacity-achieving
    codes enjoy exponential error probability, so the error probability
    contributed by this part is $\exp(-\Omega(4^\ell))$.

    It remains to specify the constant $C$.  Recall that different
    coefficients correspond to different amounts of bits, so a $4/\tau^2$
    fraction of bad coefficients does not directly translate to the same
    fraction of bad bits.  However, observe that $r \coloneqq \log_4 R - \ell
    + 3\ell \log_{16} \delta$ is in fact a constant multiple of $\ell$ ($\ell
    / \log_4 R$ is constant because $\ell / \log_4 N$ is fixed, and $\delta$
    is a constant because $p$ is fixed).  Also observe that the leading term
    of \eqref{bits}, $\log_4 N$, is a constant multiple of $\ell$.  This
    implies that each coefficient corresponds to at most $\log_4 N$ bits and
    on average $r$ bits, so the number of errors only increases by a factor
    of $\log_4(N)/r$.  Hence choosing $C$ to be slightly larger than
    $\log_4(N)/r$ suffices.

    In conclusion, the error probability of Theorem~\ref{thm:noisy} is the
    sum of three sources, $\exp(-\Omega(N))$, $\exp(-\Omega(2^\ell))$, and
    $\exp(-\Omega(4^\ell))$, which is $\exp(-\Omega(2^\ell))$.

\section{Conclusion}

    In this paper we discuss how to use frequency vectors to store data.  In
    the noiseless case, we design a low-complexity scheme that completes the
    short-strand regime left open by the previous work \cite{TWG25}.  In the
    noisy case, we use the Fourier transform to switch to a basis in which
    the noise is easily predictable, and design a low-complexity scheme that
    is a small distance away from the capacity characterization by
    \cite{GSW26}.

\bibliographystyle{IEEEtran}
\bibliography{NoisyFrequency-26}

% Generated by IEEEtran.bst, version: 1.14 (2015/08/26)
\begin{thebibliography}{10}
\providecommand{\url}[1]{#1}
\csname url@samestyle\endcsname
\providecommand{\newblock}{\relax}
\providecommand{\bibinfo}[2]{#2}
\providecommand{\BIBentrySTDinterwordspacing}{\spaceskip=0pt\relax}
\providecommand{\BIBentryALTinterwordstretchfactor}{4}
\providecommand{\BIBentryALTinterwordspacing}{\spaceskip=\fontdimen2\font plus
\BIBentryALTinterwordstretchfactor\fontdimen3\font minus
  \fontdimen4\font\relax}
\providecommand{\BIBforeignlanguage}[2]{{%
\expandafter\ifx\csname l@#1\endcsname\relax
\typeout{** WARNING: IEEEtran.bst: No hyphenation pattern has been}%
\typeout{** loaded for the language `#1'. Using the pattern for}%
\typeout{** the default language instead.}%
\else
\language=\csname l@#1\endcsname
\fi
#2}}
\providecommand{\BIBdecl}{\relax}
\BIBdecl

\bibitem{MiP24}
O.~Milenkovic and C.~Pan, ``{{DNA-Based Data Storage Systems}}: {{A Review}} of
  {{Implementations}} and {{Code Constructions}},'' \emph{IEEE Transactions on
  Communications}, vol.~72, no.~7, pp. 3803--3828, Jul. 2024.

\bibitem{SKS24}
O.~Sabary, H.~M. Kiah, P.~H. Siegel, and E.~Yaakobi, ``Survey for a {{Decade}}
  of {{Coding}} for {{DNA Storage}},'' \emph{IEEE Transactions on Molecular,
  Biological, and Multi-Scale Communications}, vol.~10, no.~2, pp. 253--271,
  Jun. 2024.

\bibitem{XLC24}
\BIBentryALTinterwordspacing
L.~Xiang, Q.~Liu, S.~Chen, K.~Yan, W.~Wu, and K.~Yang, ``A {{Tutorial}} on
  {{Coding Methods}} for {{DNA-Based Molecular Communications}} and
  {{Storage}},'' \emph{IEEE Internet of Things Journal}, vol.~11, no.~7, pp.
  11\,825--11\,847, Apr. 2024. [Online]. Available:
  \url{https://ieeexplore.ieee.org/document/10319319/}
\BIBentrySTDinterwordspacing

\bibitem{BMY22}
\BIBentryALTinterwordspacing
D.~{Bar-Lev}, S.~Marcovich, E.~Yaakobi, and Y.~Yehezkeally, ``Adversarial
  {{Torn-paper Codes}},'' in \emph{2022 {{IEEE International Symposium}} on
  {{Information Theory}} ({{ISIT}})}.\hskip 1em plus 0.5em minus 0.4em\relax
  Espoo, Finland: IEEE, Jun. 2022, pp. 2934--2939. [Online]. Available:
  \url{https://ieeexplore.ieee.org/document/9834766/}
\BIBentrySTDinterwordspacing

\bibitem{RVS21}
\BIBentryALTinterwordspacing
A.~N. Ravi, A.~Vahid, and I.~Shomorony, ``Capacity of the {{Torn Paper
  Channel}} with {{Lost Pieces}},'' in \emph{2021 {{IEEE International
  Symposium}} on {{Information Theory}} ({{ISIT}})}.\hskip 1em plus 0.5em minus
  0.4em\relax Melbourne, Australia: IEEE, Jul. 2021, pp. 1937--1942. [Online].
  Available: \url{https://ieeexplore.ieee.org/document/9518272/}
\BIBentrySTDinterwordspacing

\bibitem{ShV21}
\BIBentryALTinterwordspacing
I.~Shomorony and A.~Vahid, ``Torn-{{Paper Coding}},'' \emph{IEEE Transactions
  on Information Theory}, vol.~67, no.~12, pp. 7904--7913, Dec. 2021. [Online].
  Available: \url{https://ieeexplore.ieee.org/document/9576724/}
\BIBentrySTDinterwordspacing

\bibitem{RGC26}
\BIBentryALTinterwordspacing
R.~Rebimbas, I.~Gl{\'o}ria, J.~Cheg{\~a}o, M.~{Al-Rawi}, A.~Mousakhani~Ganjeh,
  and J.~A. Saraiva, ``{{DNA}} as a data storage medium,'' \emph{Journal of
  Biotechnology}, vol. 414, pp. 19--35, Jun. 2026. [Online]. Available:
  \url{https://linkinghub.elsevier.com/retrieve/pii/S0168165626000702}
\BIBentrySTDinterwordspacing

\bibitem{WeM22}
N.~Weinberger and N.~Merhav, ``The {{DNA Storage Channel}}: {{Capacity}} and
  {{Error Probability Bounds}},'' \emph{IEEE Transactions on Information
  Theory}, vol.~68, no.~9, pp. 5657--5700, Sep. 2022.

\bibitem{LSW23}
A.~Lenz, P.~H. Siegel, A.~{Wachter-Zeh}, and E.~Yaakobi, ``The {{Noisy Drawing
  Channel}}: {{Reliable Data Storage}} in {{DNA Sequences}},'' \emph{IEEE
  Transactions on Information Theory}, vol.~69, no.~5, pp. 2757--2778, May
  2023.

\bibitem{WaG25}
H.-P. Wang and V.~Guruswami, ``Geno-weaving: A framework for low-complexity
  capacity-achieving dna data storage,'' \emph{IEEE Journal on Selected Areas
  in Information Theory}, vol.~6, pp. 383--393, 2025.

\bibitem{ShH22}
I.~Shomorony and R.~Heckel, ``Information-{{Theoretic Foundations}} of {{DNA
  Data Storage}},'' \emph{Foundations and Trends\textregistered{} in
  Communications and Information Theory}, vol.~19, no.~1, pp. 1--106, 2022.

\bibitem{GSW25}
Y.~Gerzon, I.~Shomorony, and N.~Weinberger, ``Capacity of frequency-based
  channels: Encoding information in molecular concentrations,'' \emph{IEEE
  Transactions on Information Theory}, vol.~71, no.~8, pp. 5788--5808, Aug
  2025.

\bibitem{TWG25}
R.~Tamir, N.~Weinberger, and A.~G. i~Fàbregas, ``Dna storage in the short
  molecule regime,'' 2025.

\bibitem{GSW26}
Y.~Gerzon, I.~Shomorony, and N.~Weinberger, ``On the capacity of noisy
  frequency-based channels,'' 2026.

\bibitem{FKK25}
S.~M. Forget, M.~J. Krawczyk, A.~M. Knight, C.~Ching, R.~A. Copeland,
  N.~Mahmoodi, M.~A. Mayo, J.~Nguyen, A.~Tan, M.~Miller, J.~Vroom, and S.~Lutz,
  ``Evolving a terminal deoxynucleotidyl transferase for commercial enzymatic
  {{DNA}} synthesis,'' \emph{Nucleic Acids Research}, vol.~53, no.~4, p.
  gkaf115, Feb. 2025.

\bibitem{FMS21}
S.~Filges, P.~Mouhanna, and A.~St{\aa}hlberg, ``Digital {{Quantification}} of
  {{Chemical Oligonucleotide Synthesis Errors}},'' \emph{Clinical Chemistry},
  vol.~67, no.~10, pp. 1384--1394, Oct. 2021.

\bibitem{MOS22}
Y.~Masaki, Y.~Onishi, and K.~Seio, ``Quantification of synthetic errors during
  chemical synthesis of {{DNA}} and its suppression by non-canonical
  nucleosides,'' \emph{Scientific Reports}, vol.~12, no.~1, p. 12095, Jul.
  2022.

\bibitem{YTL24}
M.~Yu, X.~Tang, Z.~Li, W.~Wang, S.~Wang, M.~Li, Q.~Yu, S.~Xie, X.~Zuo, and
  C.~Chen, ``High-throughput {{DNA}} synthesis for data storage,''
  \emph{Chemical Society Reviews}, vol.~53, no.~9, pp. 4463--4489, 2024.

\bibitem{Sai88}
R.~K. Saiki, D.~H. Gelfand, S.~Stoffel, S.~J. Scharf, R.~Higuchi, G.~T. Horn,
  K.~B. Mullis, and H.~A. Erlich, ``Primer-directed enzymatic amplification of
  dna with a thermostable dna polymerase,'' \emph{Science}, vol. 239, no. 4839,
  pp. 487--491, 1988.

\bibitem{Joh93}
K.~A. Johnson, ``Conformational coupling in dna polymerase fidelity.''
  \emph{Annual review of biochemistry}, vol.~62, pp. 685--713, 1993.

\bibitem{LvB24}
W.~{Liu-Wei}, W.~{van~der~Toorn}, P.~Bohn, M.~H{\"o}lzer, R.~P. Smyth, and
  M.~Von~Kleist, ``Sequencing accuracy and systematic errors of nanopore direct
  {{RNA}} sequencing,'' \emph{BMC Genomics}, vol.~25, no.~1, p. 528, May 2024.

\bibitem{GSS21}
\BIBentryALTinterwordspacing
F.~Götze, H.~Sambale, and A.~Sinulis, ``Concentration inequalities for
  polynomials in alpha-sub-exponential random variables,'' \emph{Electronic
  Journal of Probability}, vol.~26, no. none, Jan. 2021. [Online]. Available:
  \url{http://dx.doi.org/10.1214/21-EJP606}
\BIBentrySTDinterwordspacing

\end{thebibliography}

\end{document}